\newtheorem{Thm}{Theorem}
\newtheorem{Lem}[Thm]{Lemma}
\newtheorem{Cor}[Thm]{Corollary}
\newtheorem{Def}{Definition}
\newcommand{\qed}{\hfill{$\rule{6pt}{6pt}$}} 
\newenvironment{proof}{\noindent{\bf Proof}:}{\qed}
\newcommand\supp{\mbox{\tt {supp}}\xspace}
\mathchardef\mhyphen="2D
\newcommand{\ket}[1]{| #1 \rangle}
\newcommand{\bra}[1]{\langle #1 |}
\newcommand\qip[2]{\langle #1 | #2 \rangle}
\newcommand{\av}{{\mathbb{E}}}
\newcommand\mbR{\mbox{$\mathbb{R}$}}
\newcommand {\ie} {\textit{i.e.}\xspace}
\begin{document}
\title{Quantum game players can have advantage without discord}
\author{Zhaohui Wei\thanks{School of Physics and Mathematical Sciences,
Nanyang Technological University and Centre for Quantum
Technologies, Singapore.} \ and \ Shengyu Zhang\thanks{Department of
Computer Science and Engineering, The Chinese University of Hong
Kong, Hong Kong.}}
\date{}
\maketitle

\begin{abstract}
The last two decades have witnessed a rapid development of quantum information processing, a new paradigm which studies the power and limit of ``quantum advantages'' in various information processing tasks. Problems such as when 
quantum advantage exists, and if existing, how much it could be, are
at a central position of these studies. In a broad class of
scenarios, there are, implicitly or explicitly, at least two parties
involved, who share a state, and the correlation in this shared
state is the key factor to the efficiency under concern. In these
scenarios, the shared \emph{entanglement} or \emph{discord} is
usually what accounts for quantum advantage. In this paper, we
examine a fundamental problem of this nature from the perspective of
game theory, a branch of applied mathematics studying selfish
behaviors of two or more players. We exhibit a natural zero-sum
game, in which the chance for any player to win the game depends
only on the ending correlation. We show that in a certain classical
equilibrium, a situation in which no player can further increase her
payoff by any local classical operation, whoever first uses a
quantum computer has a big advantage over its classical opponent.
The equilibrium is fair to both players and, as a shared
correlation, it does not contain any discord, yet a quantum
advantage still exists. This indicates that at least in game theory,
the previous notion of discord as a measure of non-classical
correlation needs to be reexamined, when there are two players with
different objectives.
\end{abstract}

\section{Introduction}
Quantum computers have exhibited tremendous power in algorithmic, cryptographic, information theoretic, and many other information processing tasks, compared with their classical counterparts. Meanwhile, for a large number of problems, quantum computers are not able to offer much advantage over classical ones. When and why quantum computers are more powerful are always at a central position in studies on quantum computation and quantum information processing. A particularly interesting class of scenarios is when there are, implicitly or explicitly, at least two parties involved who share a state, the correlation in this state is the key factor. What accounts for the quantum advantage is often \emph{entanglement}, one of the most distinctive characters of quantum information. 
Indeed, it has been showed that a quantum algorithm with only slight entanglement can be simulated efficiently by a classical computer \cite{Vidal03}. In certain potential applications of quantum algorithms, it is also shown that entangled measurement is necessary for the existence of efficient quantum algorithms \cite{HMR+10}.

Recently people started to realize that entanglement is not always a necessary resource needed for generating quantum correlations. It has been found that \emph{discord}, another unique character of quantum states, also plays an important role in quantum information processing \cite{OZ02}. Discord is a relaxed version of entanglement---states with positive entanglement must also have positive discord, but there are states with positive discord but zero entanglement. People has discovered cases where quantum speed-up exists without entanglement involved, and discord is considered to be responsible for the quantum advantage \cite{DSC08}. Till today, discord is widely considered as necessary for the existence of quantum advantages.

In this paper, we reexamine this notion from the perspective of game
theory \cite{OR94}. Game theory studies the situation in which there
are two or more players with possibly different goals. There are two
broad classes of games, one is strategic-form (or normal-form)
games, in which all players make their choice simultaneously; a
typical example is Rock-Paper-Scissors. The other class is
extensive-form games, in which players make their moves in turn; a
typical example is chess.

The research on quantum games began about one decade ago, starting with two pioneering papers.\footnote{Note that there is also a class of ``nonlocal games'', such as CHSH or GHZ games \cite{BCMdW10}, where all the players have the \emph{same} objective. But general game theory focuses more on situation that the players have \emph{different} objective functions, and the players are selfish, each aiming to optimize her own objective function only.} 
The first one \cite{EWL99} aimed to quantize a specific
strategic-form game called \emph{Prisoners' Dilemma} \cite{EWL99},
and it unleashed a long sequence of follow-up works in the same
model. Despite the rapid growth of literature, controversy also
largely exists \cite{BH01,vEP02,CT06}, which questioned the meaning
of the claimed quantum solution, the ad hoc assumptions in the
model, and the inconsistency with standard settings of classical
strategic games. Recently a new model was proposed for quantizing
general strategic-form games \cite{Zha12}. Compared with
\cite{EWL99}, the new model corresponds to the classical games more
precisely, and has rich mathematical structures and game-theoretic
questions; also see later theoretical developments
\cite{KZ12,WZ13,JSWZ13,PKL+15}.

Back to the early stage of the development of quantum game theory,
the other pioneering paper was \cite{Mey99}, which demonstrated the
power of using quantum strategies in an extensive-form game. More
specifically, Meyer considered the quantum version of the classical
Penny Matching game. The basic setting is as follows. There are two
players, and each has two possible actions on one bit: Flip it or
not. Starting with the bit being 0, Player 1 first takes an action,
and then Player 2 takes an action, and finally Player 1 takes
another action, and the game is finished. If the bit is finally 0,
then Player 1 wins; otherwise Player 2 wins. It is not hard to see
that if Player 2 flips the bit with half probability, then no matter
what Player 1 does, each player wins the game with half probability.
Now consider the following change of setting: The bit becomes a
qubit; the first player uses a quantum computer in the sense that
she can perform any quantum admissible operation on the bit; the
second player uses a classical computer in the sense that she can
perform either Identity or the flip operation $\begin{bmatrix} 0 & 1
\\ 1 & 0\end{bmatrix}$. In this new setting, Player 1 can win the
game with certainty! Her winning strategy is simple: she first
applies a Hadamard gate to change the state to $\ket{+} =
(\ket{0}+\ket{1})/\sqrt{2}$, and then no matter whether Player 2
applies the flip operation or not, the state remains the same
$\ket{+}$, thus in the third step Player 1 can simply apply a
Hadamard gate again to rotate the state back to $\ket{0}$. This
shows that a player using a quantum computer can have big advantage
over one using a classical computer.

Despite a very interesting phenomena it exhibits, the quantum
advantage is not the most convincing due to a fairness issue. After
all, the quantum player takes two actions and the classical player
takes just one. And the order of ``Player 1 $\rightarrow$ Player 2
$\rightarrow$ Player 1'' is also crucial for the quantum advantage.
One remedy is to consider normal-form games, in which the players
give their strategies \emph{simultaneously}, thus there is no longer
the issue of the action order. Taking the model in \cite{Zha12}, two
players play a complete-information normal-form game, with a
starting state $\rho$ in systems $(A_1,A_2)$, and $A_i$ being given
to Player $i$. A classical player can only measure her part of the
state in the computational basis, followed by whatever classical
operation $C$ (on the computational basis). In previous works
\cite{EWL99,Mey99,ZWC+12} the classical player is usually assumed to
be able to apply any classical operation on computational basis
(such as $X$-gate), followed by a measurement in the computational
basis. A classical operations there is implicitly assumed to be
unitary, so the operation in the matrix form is a permutation
matrix. Here we allow classical player to measure first and then
perform any classical operation, which gives her more power since
the second-step classical operation does need to be unitary. Indeed,
in Meyer's Penny Matching game, in the second step Player 2 could
measure the state first and then randomly set it to be $\ket{0}$ or
$\ket{1}$ each with half probability. Then in the third step, Player
1's Hadamard gate will change the state to $\ket{+}$ or $\ket{-}$,
in either case, Player 1 could win with only half probability.

Even if we now enlarge the space of possible operations of the
classical player, we will show examples where the quantum player has
advantage of winning the game. Furthermore, the examples have the
following nice properties respecting the fairness of the game:
\begin{enumerate}
    \item If both players are classical, then both get expected payoff 0, and $\rho$ is a correlated equilibrium in the sense that any classical operation $C$ by one player cannot increase her expected payoff.
    \item Suppose that one player remains classical and the other player uses a quantum computer. To illustrate the power of using quantum strategies, we cut the classical player some slack as follows. The classical player can (1) pick one subsystem, $A_1$ or $A_2$, of $\rho$, leaving the other subsystem to the quantum player, and (2) ``take side" by picking one of the two payoff matrices, leaving the other to the quantum player.
\end{enumerate}

Examples were found that even with the advantage of taking side and taking part of the shared state, the classical player still has a disadvantage compared to the quantum player. Consider the canonical $2\times 2$ zero-sum game with the payoff matrices being
\begin{equation}\label{eq:Penny game}
U_1=
\begin{pmatrix}
1 & -1\\
-1 & 1
\end{pmatrix}
\quad \text{ and } \quad
U_2=
\begin{pmatrix}
-1 & 1\\
1 & -1
\end{pmatrix}.
\end{equation}
\paragraph{Quantum game with entanglement}
Each player $i$ owns a 2-dimensional Hilbert space, and they share the quantum state
\begin{equation}
\ket{\psi} = \frac{1}{\sqrt{2}}(\ket{+0} + \ket{-1}) = \frac{1}{\sqrt{2}}(\ket{0+}+\ket{1-}),
\end{equation}
where $\ket{+}=\frac{1}{\sqrt{2}}(\ket{0}+\ket{1})$, and $\ket{-}=\frac{1}{\sqrt{2}}(\ket{0}-\ket{1})$.
It is not difficult to verify that if both players measure their parts in the computational basis, then each gets payoff 1 and $-1$ with equal probability, resulting an average payoff of zero for both players. This is a correlated equilibrium for classical operations.

Now suppose that Player 1 employs a quantum computer. Since the state is symmetric, it does not matter which part Player 2, the classical player, chooses. Let us assume that Player 2 chooses part 2, and the payoff matrix $U_2$. Then Player 1 can apply the Hadamard transformation on her qubit, followed by the measurement in computational basis. The state immediately before the measurement is
$\ket{\psi'} = (\ket{00} + \ket{11})/\sqrt{2}$. Therefore the measurement in the computational basis gives Player 1 and Player 2 payoff 1 and $-1$, respectively, with certainty. In other words, Player 1 wins with certainty, whereas she could only win with half probability when using a classical computer.

In this example where the quantum player has an advantage, the state shared by players is highly entangled, which motivates the following natural question: Is entanglement necessary for quantum advantage in the game? It turns out that the answer is no. Consider the example below.

\paragraph{Quantum game with discord} The payoff matrices are the same as before, but the quantum state shared by players is the following.
\begin{equation}
\rho =
\frac{1}{4}(\ket{+}\bra{+}\otimes\ket{0}\bra{0} + \ket{0}\bra{0}\otimes\ket{+}\bra{+} + \ket{-}\bra{-}\otimes \ket{1}\bra{1} + \ket{1}\bra{1}\otimes \ket{-}\bra{-}).
\end{equation}
This state is separable and thus does not have any entanglement. It can be checked that if the players measure this state in computational basis, the probability of getting each of the four possible outcomes is 1/4. Thus the overall payoff of each player is zero, and it can be verified that it is a classical correlated equilibrium.

In the quantum setting, again without loss of generality assume that the classical computer picks the second part of $\rho$ and the second payoff matrix. The quantum player can again perform a Hadamard operation on her system, resulting in a new state
\begin{equation}\label{eq:noEntan}
\rho' =
\frac{1}{4}(|0\rangle\langle0|\otimes|0\rangle\langle0|+|+\rangle\langle+|\otimes|+\rangle\langle+|+|1\rangle\langle1|\otimes|1\rangle\langle1|+|-\rangle\langle-|\otimes|-\rangle\langle-|).
\end{equation}
Measuring the new state, the quantum player gets state $\ket{00}$, $\ket{01}$, $\ket{10}$, $\ket{11}$ with probability
$3/8$, $1/8$, $1/8$, $3/8$ respectively. As a result, her winning probability
increases from 1/2 to 3/4; in other words, she gets an expected payoff of 1/2.

Note that the quantum state in Eq.\eqref{eq:noEntan} is separable,
and there is no any entanglement, but the quantum player still gets
a quantum advantage. Thus, entanglement is not necessary for quantum
advantage to exist in this game. Note that, however, the state in
Eq.\eqref{eq:noEntan} has a positive discord. As we have mentioned, it was known that in some scenarios, it is discord, rather than
entanglement, that produces non-classical correlations. So the
above example confirms this traditional notion in the new
game-theoretic setting.

These two examples were also experimentally verified recently \cite{ZWC+12}. The present paper makes further studies on the foregoing notion by asking the following fundamental question.

\begin{quote}
    \emph{Is discord necessary for quantum advantage to exist in games where players share a symmetric state?}
\end{quote}
It is tempting to conjecture that the answer is Yes. In the rest of the paper, we will show that, first, discord is indeed necessary for any quantum advantage to exist in a 2-player games where each player has $n=2$ strategies. We will then show that when $n\geq 3$, however, there are games where the quantum player has a positive advantage even when the shared symmetric state has zero discord.

\section{Preliminaries}
Suppose that in a classical game there are $k$ players, labeled by
$\{1,2,\ldots,k\}$. Each player $i$ has a set $S_i$ of strategies.
To play the game, each player $i$ selects a strategy $s_i$ from
$S_i$. We use $s=(s_1,\ldots, s_k)$ to denote the \emph{joint
strategy} selected by the players and $S= S_1 \times \ldots \times
S_k$ to denote the set of all possible joint strategies. Each player
$i$ has a utility function $u_i: S \rightarrow \mbR$, specifying the
\emph{payoff} or \emph{utility} $u_i(s)$ of Player $i$ on the joint
strategy $s$. For simplicity of notation, we use subscript $-i$ to
denote the set $[k]-\{i\}$, so $s_{-i}$ is $(s_1, \ldots, s_{i-1},
s_{i+1}, \ldots, s_k)$, and similarly for $S_{-i}$, $p_{-i}$, etc.
In this paper, we will mainly consider 2-player games.

Nash equilibrium is a fundamental solution concept in game theory. Roughly, it says that in a joint strategy, no player can gain more by changing her strategy, provided that all other players keep their current strategies unchanged. The precise definition is as follows.
\begin{Def}
A \emph{pure Nash equilibrium} is a joint strategy $s = (s_1, \ldots ,s_k) \in S$ satisfying that
\begin{align}
    u_i(s_i,s_{-i}) \geq  u_i(s_i',s_{-i})
\end{align}
for all $i\in [k]$ and all $s'_i\in S_i$.
\end{Def}
Pure Nash equilibria can be generalized by allowing each player to independently select her strategy according to some probability distribution, leading to the following concept of \emph{mixed Nash equilibrium}.

\begin{Def}
A \emph{(mixed) Nash equilibrium (NE)} is a product probability distribution $p = p_1 \times \ldots \times p_k$, where each $p_i$ is a probability distributions over $S_i$, satisfying that
\begin{align}
    \sum_{s_{-i}} p_{-i}(s_{-i}) u_i(s_i,s_{-i}) \geq  \sum_{s_{-i}} p_{-i}(s_{-i}) u_i(s_i',s_{-i}),
\end{align}
for all $i\in [k]$, and all $s_i, s'_i\in S_i$ with $p_i(s_i)>0$.
\end{Def}

A fundamental fact proved by Nash \cite{Nas51} is that every game with a finite number of players and a finite set of strategies for each player has at least one mixed Nash equilibrium.

There are various further extensions of mixed Nash equilibria. Aumann \cite{Aum74} introduced a relaxation called \emph{correlated equilibrium}. This notion assumes an external party, called Referee, to draw a joint strategy $s = (s_1, ..., s_k)$ from some probability distribution $p$ over $S$, possibly correlated in an arbitrary way, and to suggest $s_i$ to Player $i$. Note that Player $i$ only sees $s_i$, thus the rest strategy $s_{-i}$ is a random variable over $S_{-i}$ distributed according to the conditional distribution $p|_{s_i}$, the distribution $p$ conditioned on the $i$-th part being $s_i$. Now $p$ is a correlated equilibrium if any Player $i$, upon receiving a suggested strategy $s_i$, has no incentive to change her strategy to a different $s_i' \in S_i$, assuming that all other players stick to their received suggestion $s_{-i}$.

\begin{Def} \label{thm:CE}
A \emph{correlated equilibrium (CE)} is a probability distribution $p$ over $S$ satisfying that
\begin{align}
    \sum_{s_{-i}} p(s_i,s_{-i}) u_i(s_i,s_{-i}) \geq  \sum_{s_{-i}} p(s_i,s_{-i}) u_i(s_i',s_{-i}),
\end{align}
for all $i\in [k]$, and all $s_i, s'_i\in S_i$.
\end{Def}
The above statement can also be restated as
\begin{equation}
    \av_{s_{-i} \gets \mu|s_i}[u_i(s_i,s_{-i})] \ge \av_{s_{-i} \gets \mu|s_i}[u_i(s_i',s_{-i})].
\end{equation}
where $\mu|s_i$ is the distribution $\mu$ conditioned on the $i$-th component being $s_i$.
Notice that a classical correlated equilibrium $p$ is a classical Nash equilibrium if $p$ is a product distribution.

Correlated equilibria captures natural games such as the Traffic Light and the Battle of the Sexes (\cite{VNRT07}, Chapter 1). The set of CE also has good mathematical properties such as being convex (with Nash equilibria being some of the vertices of the polytope). Algorithmically, it is computationally benign for finding the best CE, measured by any linear function of payoffs, simply by solving a linear program (of polynomial size for games of constant players). A natural learning dynamics also leads to an approximate CE (\hspace{-.08em}\cite{VNRT07}, Chapter 4) which we will define next, and all CE in a graphical game with $n$ players and with $\log(n)$ degree can be found in polynomial time (\hspace{-.08em}\cite{VNRT07}, Chapter 7).

\section{Quantum game without discord}

In this section, we will address the question proposed at the end of the first section. 
Suppose that a game has two players and both of them have $n$
strategies. In other words, each player holds an $n$-dimensional
quantum system. Recall that we also require the shared quantum state $\rho\in H\otimes H$ be symmetric, so that swapping the two systems does not
change the state. It is not hard to derive from the general criteria
of zero-discord state \cite{DVB10} that these quantum states $\rho$ have the form of
\begin{equation}\label{eq:def of rho}
\rho=\sum_{i,j=0}^{n-1}p(i,j)|\psi_i\rangle\langle\psi_i|\otimes|\psi_j\rangle\langle\psi_j|,
\end{equation}
where $\{|\psi_i\rangle\}$ is a set of orthogonal basis of the $n$-dimensional Hilbert space $H$, and $P=[p(i,j)]_{ij}\in {\mbR}_+^{n\times n}$ is a
symmetric matrix with nonnegative entries satisfying that $\sum_{ij}
p(i,j) = 1$. (In general, we use the upper case letter $P$ to denote
the matrix and the lower case letter $p$ to denote the corresponding
two-variate distribution $p(i,j)$.) We sometimes also write the
state as
\begin{equation}
\rho=\sum_{i}p_1(i)\ket{\psi_i}\bra{\psi_i}\otimes \sigma_i
\end{equation}
where $p_1(i) = \sum_j p(i,j)$ is the marginal distribution on the
first system, and $\sigma_i = \sum_j \frac{p(i,j)}{p_1(i)}
\ket{\psi_j}\bra{\psi_j}$ (if $p_1(i) = 0$ then let $\sigma_i =
\ket{0}\bra{0}$).

Consider the following game as a natural extension of the Penny Matching game in Section 1. The payoff matrices are

\begin{equation}\label{eq:dice}
 U_1 = nI - J \quad and \quad U_2 = -U_1,
\end{equation}
where $J$ is the all-one matrix. Intuitively, whoever takes the first matrix bets that the two $n$-sided dice give the same side, and the other player bets that the two dice give different sides.
We first show that there is a unique correlated equilibrium in the game.
\begin{Lem}\label{lem:ce}
The game given by Eq.\eqref{eq:dice} has only one classical correlated equilibrium $Q = J/n^2$.
\end{Lem}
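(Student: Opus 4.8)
The plan is to translate the correlated-equilibrium inequalities (Definition~\ref{thm:CE}) into direct combinatorial constraints on the entries of the joint distribution $Q=[q(i,j)]$, exploiting the rigid structure of the payoff matrices in Eq.~\eqref{eq:dice}. First I would expand Player~1's CE inequality. With $u_1(i,j)=(U_1)_{ij}=n\delta_{ij}-1$, the inequality $\sum_j q(i,j)u_1(i,j)\ge\sum_j q(i,j)u_1(i',j)$ becomes $n\,q(i,i)-q_1(i)\ge n\,q(i,i')-q_1(i)$, where $q_1(i)=\sum_j q(i,j)$; the $-J$ part of $U_1$ contributes only the marginal term $q_1(i)$, which cancels. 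Hence Player~1's conditions are equivalent to $q(i,i)\ge q(i,k)$ for all $i,k$, i.e.\ the diagonal entry of each row is a maximum of that row. Symmetrically, using $u_2=-u_1$ and summing over the first index, Player~2's conditions reduce to $q(j,j)\le q(k,j)$ for all $j,k$, i.e.\ the diagonal entry of each column is a minimum of that column.

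The second step is a squeeze. Fix any $i,j$. Since $q(i,j)$ lies in row $i$, the row-maximum condition gives $q(i,j)\le q(i,i)$; since $q(i,j)$ lies in column $j$, the column-minimum condition gives $q(j,j)\le q(i,j)$. Chaining these, $q(j,j)\le q(i,i)$ for all $i,j$, so all diagonal entries equal a common value $d$. Plugging this back, every off-diagonal entry is sandwiched, $d=q(j,j)\le q(i,j)\le q(i,i)=d$, so $q(i,j)=d$ for all $i,j$; the normalization $\sum_{i,j}q(i,j)=1$ then forces $d=1/n^2$, that is $Q=J/n^2$. Conversely $J/n^2$ visibly satisfies both families of (in)equalities (all become equalities), so it is a correlated equilibrium, and by the above it is the only one.

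I do not expect a genuine obstacle here: the argument is a one-line reduction followed by a squeeze. The only care needed is bookkeeping — getting the direction of each inequality right after the sign flip $U_2=-U_1$, and handling degenerate rows or columns (if $q_1(i)=0$ the row is identically zero and the conditions hold vacuously) so that the derived ``diagonal is row-max / column-min'' characterization is genuinely \emph{equivalent} to the CE definition rather than merely implied by it.
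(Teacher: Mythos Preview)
Your proposal is correct and follows the same approach the paper sketches: write out the CE inequalities for each player with $U_1=nI-J$ and $U_2=-U_1$, simplify, and conclude $Q=J/n^2$ is the unique solution. The paper's proof stops at ``one can verify,'' whereas you carry out the verification explicitly via the row-max/column-min characterization and the resulting squeeze $q(j,j)\le q(i,j)\le q(i,i)$; this is exactly the computation the paper leaves to the reader.
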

\begin{proof}
According to the definition of correlated equilibrium, if a distribution $q$ on $[n]\times [n]$ is a
classical correlated equilibrium, then the following relationships hold:
\begin{equation}\label{eq:forCE1}
    \sum_{j}q(i,j)U_1(i,j)\geq \sum_{j}q(i,j)U_1(i',j), \qquad \forall i,i'\in\{0,1,...,n-1\},
\end{equation}
and
\begin{equation}\label{eq:forCE2}
    \sum_{i}q(i,j)U_2(i,j)\geq \sum_{i}q(i,j)U_2(i,j'), \qquad \forall j,j'\in\{0,1,...,n-1\}.
\end{equation}
Plugging the definition of $U_1$ and $U_2$ into the above inequalities, one can verify that $Q=J/n^2$ is the only solution.
\end{proof}

Recall that $\rho=\sum_{i}p_1(i)\ket{\psi_i}\bra{\psi_i}\otimes \sigma_i$. Since $\rho$ is symmetric, it does not matter which part the classical player, Player 2, chooses to hold. For the convenience of discussions, let us assume that the classical player takes the second part. We use $\supp(p)$ to denote the support of a distribution $p$, \ie the set of elements with non-zero probability. 
The next lemma gives a sufficient and necessary condition for the existence of quantum advantage. 
\begin{Lem}\label{lem:advan}
Suppose that measuring the state $\rho$ gives a classical correlated equilibrium for the game given in Eq.\eqref{eq:dice}. Then Player 1 (who is quantum) does not have any
advantage if and only if
\begin{equation}\label{eq:lem2}
\langle i|\sigma_j|i\rangle=1/n, \ \ \ \forall i\in \{0,1,...,n-1\}
\text{ and } j\in \text{\supp}(p_1).
\end{equation}
\end{Lem}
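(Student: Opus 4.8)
The plan is to reduce Player~1's completely arbitrary quantum strategy to a single classical object and then run a short averaging estimate. Recall $\rho=\sum_i p_1(i)\ket{\psi_i}\bra{\psi_i}\otimes\sigma_i$ is block-diagonal in the basis $\{\ket{\psi_i}\}$ on the first system. Any admissible quantum operation of Player~1 on $A_1$ (possibly using ancillas) followed by a readout producing a strategy in $\{0,\dots,n-1\}$ is, as far as the induced outcome statistics on the input state $\rho_{A_1}$ are concerned, a POVM $\{E_a\}_{a=0}^{n-1}$ on $A_1$; and only the diagonal numbers $c(a\mid i):=\bra{\psi_i}E_a\ket{\psi_i}\in[0,1]$ enter, and these form an arbitrary stochastic matrix, $\sum_a c(a\mid i)=1$, any of which is realizable by $E_a=\sum_i c(a\mid i)\ket{\psi_i}\bra{\psi_i}$. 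Player~2 is classical and sticks to her equilibrium strategy: she measures $A_2$ in the computational basis and reports the outcome $b$. Hence the joint distribution of the two reported strategies is $\Pr[a,b]=\sum_i p_1(i)\,c(a\mid i)\,\bra{b}\sigma_i\ket{b}$.

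Next I would compute Player~1's expected payoff under $U_1=nI-J$: since $U_1(a,b)=n\delta_{ab}-1$ and the distribution is normalized, it equals $n\sum_a\Pr[a,a]-1=n\sum_i p_1(i)\sum_a c(a\mid i)\bra{a}\sigma_i\ket{a}-1$. For each fixed $i$ the inner sum is a convex combination of the numbers $\{\bra{a}\sigma_i\ket{a}\}_a$, hence is at most $\max_a\bra{a}\sigma_i\ket{a}$, with equality when Player~1 places all the conditional mass $c(\cdot\mid i)$ on a maximizing $a$. Therefore Player~1's optimal quantum payoff is exactly $n\sum_i p_1(i)\max_a\bra{a}\sigma_i\ket{a}-1$. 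Now $\sum_a\bra{a}\sigma_i\ket{a}=\Tr\sigma_i=1$, so $\max_a\bra{a}\sigma_i\ket{a}\ge 1/n$ for every $i$, with equality if and only if $\bra{a}\sigma_i\ket{a}=1/n$ for all $a$.

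To conclude, I would invoke the hypothesis that measuring $\rho$ yields a classical correlated equilibrium: by Lemma~\ref{lem:ce} that equilibrium is $Q=J/n^2$, so Player~1's honest classical play (the POVM $E_a=\ket{a}\bra{a}$) already gives her expected payoff $0$. Since honest play is among her options, her optimal quantum payoff is always $\ge 0$; thus ``Player~1 has no advantage'' means precisely that this optimum equals $0$, i.e. $\sum_{i\in\supp(p_1)} p_1(i)\max_a\bra{a}\sigma_i\ket{a}=1/n$. Being a weighted average, with weights summing to $1$, of quantities each at least $1/n$, the left side equals $1/n$ if and only if $\max_a\bra{a}\sigma_i\ket{a}=1/n$ for every $i\in\supp(p_1)$, which is exactly condition~\eqref{eq:lem2} (with the index names $a,i$ playing the roles of $i,j$).

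I expect the only delicate point to be the first reduction step: justifying that a fully general quantum strategy of Player~1 — arbitrary ancillas, arbitrary channel, arbitrary final measurement — cannot outperform a POVM diagonal in the basis $\{\ket{\psi_i}\}$, and that every stochastic matrix $c(\cdot\mid\cdot)$ is attainable by her. Once that is in place, the rest is the convexity estimate together with the equality analysis, which is routine.
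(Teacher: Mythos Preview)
Your argument is correct and follows essentially the same approach as the paper: both identify that Player~1's optimal play amounts to first distinguishing the index $i$ (measuring in the $\{\ket{\psi_i}\}$ basis) and then outputting an $a$ that extremizes $\bra{a}\sigma_i\ket{a}$, and both finish via the averaging bound $\max_a\bra{a}\sigma_i\ket{a}\ge 1/n$ with equality iff \eqref{eq:lem2}. Your POVM reduction packages the two directions into a single exact optimal-payoff formula, whereas the paper argues them separately (``only if'' via the operational measure-and-replace strategy, ``if'' by observing that under \eqref{eq:lem2} Player~2's measurement outcome becomes uniform and independent of $A_1$). One small omission: you treat only the case where Player~1 holds $U_1$, while the paper also covers the $U_2$ case (replacing $\max$ by $\min$); the argument is symmetric and yields the same condition, so this is easily patched.
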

\begin{proof}
\emph{``Only if"}: Assume that Player 1 first measures her part in
the orthonormal basis $\{|\psi_i\rangle\}$. Note that this does not affect the state. If outcome $j$
occurs, then Player 1 knows that the state of Player 2 is
$\sigma_j$. We consider which utility matrix in Eq.\eqref{eq:dice} Player 1 has. In the first case, Player 1 takes the utility matrix $U_1$. It is not hard to see that her optimal strategy is to replace
her part $\ket{\psi_j}$ by $\ket{i}$, where $i$ is a maximizer of
$\max_i \langle i|\sigma_j|i\rangle$. Thus Player 1 has a strict
positive advantage if and only if there is some $i$ and $j$, where
$j\in \supp(p_1)$, with $\bra{i}\sigma_j \ket{i} > 1/n$, which is
equivalent to saying that there is some $i$ and $j\in \supp(p_1)$
with $\bra{i}\sigma_j \ket{i} \neq 1/n$.

Similarly, if Player 1 takes the utility matrix $U_2$, then her
optimal strategy is to replace $\ket{\psi_j}$ with $\ket{i}$, where
$i$ is a minimizer of $\min_i \langle i|\sigma_j|i\rangle$. Thus
Player 1 has a strict positive advantage if and only if there is
some $i$ and $j$ with $\bra{i}\sigma_j \ket{i} < 1/n$, which is
again equivalent to saying that there is some $i$ and $j$ with
$\bra{i}\sigma_j \ket{i} \neq 1/n$.

\emph{``If"}: Player 2 measures her part in the computational
basis, yielding the state
\begin{equation*}
\frac{1}{n}\sum_{i,j} p_1(j) \ket{\psi_j}\bra{\psi_j}\otimes \ket{i}\bra{i}.
\end{equation*}
Now whatever quantum operation Player 1 applies, the probability of
observing the same bits (\ie the state after the measurement is
$\ket{ii}$ for some $i$) is $1/n$, with the expected payoff of 0
for both players.
\end{proof}


\medskip
Though the above lemma gives a sufficient and necessary condition, it is still not always clear whether quantum advantage could exist for any symmetric state $\rho$ with zero discord. Next we will further the study by considering a related matrix $M \in \mbR_+^{n\times n}$, whose $(i,j)$-th entry is defined to be
\begin{equation}\label{eq:M}
    M (i,j) = |\qip{i}{\psi_j}|^2.
\end{equation}

It turns out that the rank of $M$ is an important criteria to our question. In the rest of this section, we will consider two cases, depending on whether $M$ is full rank or not.

\subsection{Case 1: $M$ is full-rank}
We will first show that if $M$ is full-rank, then the quantum player cannot have any advantage.
\begin{Thm}
Suppose that the two players of the game Eq.\eqref{eq:dice} share a symmetric state $\rho$, measuring which gives a classical correlated equilibrium. Then Player 1 (who is quantum) does not have any advantage if $M$ in Eq.\eqref{eq:M} is
full-rank.
\end{Thm}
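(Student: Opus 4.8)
The plan is to reduce to Lemma~\ref{lem:advan} and then exploit the double stochasticity of $M$. By that lemma it suffices to prove $\langle i|\sigma_j|i\rangle = 1/n$ for all $i$ and all $j\in\supp(p_1)$; I will in fact show the much stronger fact that the full-rank hypothesis forces $P = J/n^2$, so that $\rho = \frac{1}{n^2}\sum_{i,j}|\psi_i\rangle\langle\psi_i|\otimes|\psi_j\rangle\langle\psi_j|$ (up to the choice of basis $\{|\psi_i\rangle\}$), from which the condition of Lemma~\ref{lem:advan} is immediate.

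First I would translate the equilibrium hypothesis into a matrix identity. A direct computation shows that measuring $\rho$ in the computational basis yields outcome $(a,b)$ with probability $\sum_{i,j}p(i,j)|\langle a|\psi_i\rangle|^2|\langle b|\psi_j\rangle|^2 = (MPM^T)_{ab}$, where $M$ is as in Eq.~\eqref{eq:M}. Since this distribution is assumed to be a classical correlated equilibrium, Lemma~\ref{lem:ce} forces it to equal the unique such equilibrium $Q = J/n^2$; hence $MPM^T = J/n^2$.

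Next I would use that $M$ is doubly stochastic: each row sums to $\langle i|\bigl(\sum_j|\psi_j\rangle\langle\psi_j|\bigr)|i\rangle = 1$ because $\{|\psi_j\rangle\}$ is an orthonormal basis, and symmetrically each column sums to $1$. Writing $\mathbf{1}$ for the all-ones column vector, this says $M\mathbf{1} = \mathbf{1}$ and $M^T\mathbf{1} = \mathbf{1}$, and $J = \mathbf{1}\mathbf{1}^T$. Now the full-rank hypothesis enters: $M$ is invertible, so $M^{-1}\mathbf{1} = \mathbf{1}$ and $(M^T)^{-1}\mathbf{1} = \mathbf{1}$, whence
\[
 P = M^{-1}\Bigl(\tfrac{1}{n^2}\mathbf{1}\mathbf{1}^T\Bigr)(M^T)^{-1} = \tfrac{1}{n^2}\mathbf{1}\mathbf{1}^T = J/n^2 .
\]
Reading off the consequences: $p(i,j) = 1/n^2$ gives $p_1(i) = 1/n$ for every $i$ (so $\supp(p_1)$ is everything), and $\sigma_j = \sum_k \frac{p(j,k)}{p_1(j)}|\psi_k\rangle\langle\psi_k| = \frac{1}{n}\sum_k|\psi_k\rangle\langle\psi_k| = I/n$, so $\langle i|\sigma_j|i\rangle = 1/n$ for all $i,j$; Lemma~\ref{lem:advan} then gives that Player~1 has no advantage.

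The argument is short, and the only genuinely load-bearing step is the inversion of $M$: when $M$ is singular, $MPM^T = J/n^2$ no longer determines $P$, and it is precisely in that regime that zero-discord states admitting a quantum advantage will exist, which is why the case ``$M$ not full rank'' must be handled separately. I therefore expect no real obstacle here beyond correctly setting up the identity $MPM^T = J/n^2$ and noticing that $\mathbf{1}$ is a common eigenvector of $M$ and $M^T$ that is preserved under inversion.
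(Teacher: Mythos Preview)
Your argument is correct, and it takes a genuinely different route from the paper. Both proofs start from Lemma~\ref{lem:ce} (the measured distribution must be $J/n^2$) and finish with Lemma~\ref{lem:advan}, but they exploit the full-rank hypothesis differently. The paper rewrites the equilibrium condition as $\sum_i M(k,i)\,p_1(i)\bigl(\langle j|\sigma_i|j\rangle-\tfrac{1}{n}\bigr)=0$ for all $k,j$, i.e.\ $MA=0$ for the matrix $A$ with entries $a(i,j)=p_1(i)\bigl(\langle j|\sigma_i|j\rangle-\tfrac{1}{n}\bigr)$, and then uses injectivity of $M$ to conclude $A=0$. You instead write the equilibrium as $MPM^T=J/n^2$, observe that $M$ is doubly stochastic so $M^{-1}\mathbf{1}=\mathbf{1}$, and invert on both sides to get $P=J/n^2$ outright. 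Your version proves strictly more (it pins down $P$, hence $\sigma_j=I/n$, not merely its computational-basis diagonal), and the doubly-stochastic observation is a clean way to see why $J$ is fixed under $M^{-1}(\cdot)(M^T)^{-1}$; the paper's version is a touch more economical in that it never needs that observation and goes straight to the quantity Lemma~\ref{lem:advan} asks about. The paper does later use exactly your identity $MPM^T=J/n^2$ in the rank-deficient analysis (Theorem~\ref{thm:Opt}), so your formulation dovetails nicely with the rest of the paper.
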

\vspace{0.1in}
\begin{proof}
By Lemma \ref{lem:ce}, for any $0\leq k,j\leq n-1$ we have
\[
\sum_{i=0}^{n-1}p_1(i)|\langle k|\psi_i\rangle|^2\cdot\langle
j|\sigma_i|j\rangle=\frac{1}{n^2}
\]
Summing over $j$, we obtain another equality
\[
\sum_{i=0}^{n-1}p_1(i)|\langle k|\psi_i\rangle|^2=\frac{1}{n}.
\]
Combining these two equalities, we have
\[
\sum_{i=0}^{n-1}|\langle k|\psi_i\rangle|^2\cdot p_1(i)\left(\langle
j|\sigma_i|j\rangle-\frac{1}{n}\right)=0.
\]
Define a matrix $A=[a(i,j)]_{ij}\in {\mbR}^{n\times n}$ by
$a(i,j)=p_1(i)\left(\langle j|\sigma_i|j\rangle-\frac{1}{n}\right)$.
Then the above equality is just $\sum_i M(k,i)a(i,j) = 0$ for all
$k,j$. In other words, we have
\begin{equation}
M\cdot A=0.
\end{equation}
Since the matrix $M$ is assumed to be full-rank, we have $A = M^{-1}
0 = 0$. The conclusion thus follows by Lemma \ref{lem:advan}.
\end{proof}

Two corollaries are in order. First, note that $M$ is full-rank for a generic orthogonal basis $\{|\psi_i\rangle\}$, it is generically true that no discord implies no quantum advantage.
\begin{Cor}
If a set of orthonormal basis $\{\ket{\psi_i}\}$ is picked uniformly at random, then with probability 1, the quantum player does not have any advantage.
\end{Cor}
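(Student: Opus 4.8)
The plan is to reduce the statement to the theorem just proved: since a full-rank $M$ precludes any quantum advantage, it suffices to show that a uniformly random orthonormal basis $\{\ket{\psi_i}\}$ yields a full-rank $M$ with probability $1$. I would identify a basis with a unitary $U\in\mathrm{U}(n)$ whose $j$-th column is $\ket{\psi_j}$, so that the uniform distribution on bases is the push-forward of the Haar measure on $\mathrm{U}(n)$; since permuting the columns of $U$ or multiplying them by phases leaves $\mathrm{rank}(M)$ unchanged, the event ``$M$ is singular'' is well defined. By Eq.~\eqref{eq:M}, $M(i,j)=|U_{ij}|^2=(\mathrm{Re}\,U_{ij})^2+(\mathrm{Im}\,U_{ij})^2$, so the map $U\mapsto\det M(U)$ is a polynomial in the real and imaginary parts of the entries of $U$; in particular it is real-analytic on the manifold $\mathrm{U}(n)$.

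Next I would check that this polynomial is not the zero function: taking $U=I$, i.e. $\ket{\psi_i}=\ket{i}$ the computational basis, gives $M=I$ and hence $\det M=1\neq 0$ (indeed $M$ stays full-rank for any small perturbation of the computational basis). Then I invoke the standard fact that a real-analytic function on a connected real-analytic manifold is either identically zero or vanishes only on a set of measure zero with respect to the Riemannian volume; since $\mathrm{U}(n)$ is connected and its Haar measure lies in that measure class, the set $\{U:\det M(U)=0\}$ has Haar measure $0$. Hence with probability $1$ a random basis produces a full-rank $M$, and on that event the above theorem gives directly that Player 1 has no advantage, proving the corollary. (One may note further that when $M$ is full-rank the only symmetric zero-discord state measuring to $Q=J/n^2$ is the maximally mixed state: $M$ is doubly stochastic, so $M\mathbf 1=\mathbf 1$ forces the coefficient matrix to be $J/n^2$ and $\rho=(I/n)^{\otimes2}$; this makes the conclusion intuitively transparent but is not needed.)

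The one place that deserves care, rather than a glib ``genericity'' appeal, is the measure-theoretic step: one must apply the ``a nonvanishing analytic function has a null zero set'' lemma in the correct measure class on $\mathrm{U}(n)$, i.e. use that the Haar measure of the compact group $\mathrm{U}(n)$ is absolutely continuous with respect to the Lebesgue-in-charts (Riemannian) volume. (If one instead reads ``orthonormal basis'' as a real orthogonal matrix, $\mathrm{O}(n)$ is disconnected, but the same argument applied on each of its two connected components still gives a null exceptional set.) Everything else is a one-line verification or a direct citation of the theorem, so I expect no substantive obstacle beyond stating the analytic-zero-set fact precisely.
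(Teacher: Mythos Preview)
Your proposal is correct and follows exactly the route the paper intends: reduce to the preceding theorem by arguing that a Haar-random basis gives a full-rank $M$ almost surely. The paper itself offers only the one-line assertion ``$M$ is full-rank for a generic orthogonal basis $\{\ket{\psi_i}\}$'' immediately before the corollary, with no further justification; your write-up supplies the missing rigor (identifying $\det M$ as a real-analytic function on $\mathrm{U}(n)$, exhibiting the witness $U=I$, and invoking the null-zero-set lemma for analytic functions in the Haar measure class), together with the pleasant side observation that in the full-rank case the only admissible state is $(I/n)^{\otimes 2}$.
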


The second corollary considers the case of $n=2$, which is settled
by the above theorem completely. Indeed, when $n=2$, the rank of $M$
is either 1 or 2. The rank-2 case is handled by the above theorem.
If the rank is 1, it is not hard to see that the only possible $M$
is $M = \begin{bmatrix} 1/2 & 1/2 \\ 1/2 & 1/2 \end{bmatrix}$. In
this case, for any $i$ and any $k$ it holds that
\[
\bra{k}\sigma_i \ket{k} = \bra{k} \Big(\sum_j p(j|i) \ket{\psi_j} \bra{\psi_j}\Big)  \ket{k} = \sum_j p(j|i) |\qip{k}{\psi_j}|^2 = \frac{1}{2} \sum_j p(j|i) = \frac{1}{2}.
\]
Applying Lemma \ref{lem:advan}, we thus get the following corollary.
\begin{Cor}
There is no quantum advantage for the game defined in Eq.\eqref{eq:Penny game} on any symmetric state $\rho$ with zero discord.
\end{Cor}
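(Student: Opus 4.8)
The plan is to reduce to the Theorem just proved by a case analysis on the rank of the matrix $M$ from Eq.\eqref{eq:M}. Note that the game in Eq.\eqref{eq:Penny game} is exactly the $n=2$ instance of the family Eq.\eqref{eq:dice}, since $2I - J = \begin{pmatrix} 1 & -1 \\ -1 & 1 \end{pmatrix}$; hence here $M$ is a $2\times 2$ matrix and so has rank either $2$ or $1$, and these two cases will be handled separately.

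First I would record that $M$ is doubly stochastic: its $j$-th column sums to $\sum_i |\qip{i}{\psi_j}|^2 = 1$ because $\{\ket{i}\}$ is an orthonormal basis, and its $k$-th row sums to $\sum_j |\qip{k}{\psi_j}|^2 = 1$ because $\{\ket{\psi_j}\}$ is an orthonormal basis. Consequently, for $n=2$ the matrix $M$ must have the form $\begin{bmatrix} a & 1-a \\ 1-a & a \end{bmatrix}$ for some $a \in [0,1]$.

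If $M$ has rank $2$, the preceding Theorem applies directly and the quantum player has no advantage. If $M$ has rank $1$, then $\det M = a^2 - (1-a)^2 = 0$, which forces $a = 1/2$, so $M = \tfrac12 J$. Writing $\sigma_i = \sum_j p(j|i)\ketbra{\psi_j}$ with $p(j|i) = p(i,j)/p_1(i)$ for $i \in \supp(p_1)$, we then obtain, for every $k \in \{0,1\}$ and every $i \in \supp(p_1)$,
\[
\bra{k}\sigma_i\ket{k} \;=\; \sum_j p(j|i)\, M(k,j) \;=\; \tfrac12 \sum_j p(j|i) \;=\; \tfrac12 \;=\; \tfrac1n .
\]
This is precisely condition Eq.\eqref{eq:lem2}, so Lemma \ref{lem:advan} yields that Player 1 has no advantage, and the two cases together complete the proof.

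The only step that requires a little care — the ``main obstacle'', modest as it is — is pinning down the rank-$1$ case: one must invoke the doubly-stochastic observation to conclude that the only rank-$1$ possibility is the uniform matrix $\tfrac12 J$ (as opposed to some other rank-$1$ nonnegative matrix), after which the displayed computation is just the familiar one-line reduction to Lemma \ref{lem:advan} already used in the $M$ full-rank Theorem. No genuinely new idea beyond that Theorem is needed; the corollary is simply the remark that for $n=2$ the non-full-rank case collapses to the trivial uniform $M$.
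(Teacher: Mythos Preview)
Your proposal is correct and follows essentially the same approach as the paper: split on $\mathrm{rank}(M)\in\{1,2\}$, invoke the full-rank Theorem for rank $2$, and for rank $1$ show $M=\tfrac12 J$ and then verify Eq.\eqref{eq:lem2} directly to apply Lemma~\ref{lem:advan}. The only difference is that you make explicit the doubly-stochastic observation that the paper leaves as ``it is not hard to see''; otherwise the arguments coincide line for line.
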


\subsection{Case 2: $M$ is not full rank}
Somewhat surprisingly, the quantum player \emph{can} have an advantage when $M$ is not full-rank. In this section we exhibit a counterexample for $n=3$. In this case, recall that the payoff matrices are
\begin{equation}
U_1=
\begin{pmatrix}
2 & -1 & -1\\
-1 & 2 & -1\\
-1 & -1 & 2
\end{pmatrix}
\quad \text{ and } \quad
U_2=
\begin{pmatrix}
-2 & 1 & 1\\
1 & -2 & 1\\
1 & 1 & -2
\end{pmatrix}.
\end{equation}

We consider the following quantum state,
\begin{equation}
\rho=\sum_{i,j=0}^{2}p(i,j)|\psi_i\rangle\langle\psi_i|\otimes|\psi_j\rangle\langle\psi_j|,
\end{equation}
where
\begin{align}
 |\psi_0\rangle = \frac{1}{\sqrt{2}}\left(|0\rangle+|1\rangle\right), \quad |\psi_1\rangle = \frac{1}{\sqrt{2}}\left(|0\rangle-|1\rangle\right), \quad |\psi_2\rangle = |2\rangle.
\end{align}
It is not hard to calculate $M$:
\begin{equation}\label{eq:forExa_1}
M=
\begin{pmatrix}
1/2 & 1/2 & 0\\
1/2 & 1/2 & 0\\
0 & 0 & 1
\end{pmatrix}.
\end{equation}
which has rank 2. Define
\begin{equation}\label{eq:forExa_2}
P=
\begin{pmatrix}
4/9 & 0 & 0\\
0 & 0 & 2/9\\
0 & 2/9 & 1/9
\end{pmatrix}.
\end{equation}

It can be easily verified that if the two players measure the state in computational basis, the probability distribution yielded is uniform, which is a classical Nash equilibrium.

Now suppose that Player 1 uses a quantum computer. One can verify that the condition in Lemma
\ref{lem:advan} does not hold. For a concrete illustration, let us consider the protocol in Lemma
\ref{lem:advan} again. Player 1 first measures in the
basis $\{\ket{+}, \ket{-}, \ket{2}\}$. With probability 4/9, she observes $\ket{+}$, then changes it to $\ket{0}$. Player 2's state is also $\ket{+}$ in this case, thus a measurement in the computational basis gives the $\ket{00}$ and $\ket{01}$ each with half probability. Thus Player 1's payoff in this case is $2\cdot \frac{1}{2} - 1\cdot \frac{1}{2} = \frac{1}{2}$. The second case is that Player 1 observes $\ket{-}$, which happens with
probability 2/9, and Player 2's state is $\ket{2}$ for sure. Player 1 changes her part
to $\ket{2}$, and gets payoff $2$. The third case is that Player 1 observes $\ket{2}$, which happens with probability 1/3, leaving Player 2 $\sigma_3 = (2/3) \ket{1}\bra{1} + (1/3) \ket{2}\bra{2}$. Player 1 then changes her qubit to $\ket{1}$, collides with Player 2's outcome with probability 1/3, thus Player 1's payoff is $2\cdot \frac{1}{3} - 1\cdot \frac{2}{3} = 0$. On average, the quantum player has a payoff of $(4/9)(1/2) + (2/9)\cdot 2 + (1/3) \cdot 0 = 2/3.$

It should be pointed out that the matrix $P$ achieving the quantum
advantage of 2/3 is not unique. For example, the following matrix
also works with the same effect:
\begin{equation}\label{eq:forExa_3}
P=
\begin{pmatrix}
2/9 & 2/9 & 0\\
0 & 0 & 2/9\\
1/9 & 1/9 & 1/9
\end{pmatrix}.
\end{equation}

\subsection{Optimization}

In this subsection, we show that the 3-dimensional example in the
above subsection is actually optimal for $M$ defined in Eq.\eqref{eq:forExa_1}. Actually the theorem below shows more. Note that
if the rank of $M$ is 1, it is easy to prove that $M$ must be the
uniform matrix, and the quantum advantage must be zero, thus in the
following we suppose the rank of $M$ to be 2.

\begin{Thm}\label{thm:Opt}
Suppose that measuring the state $\rho$ gives a classical correlated
equilibrium. Suppose the columns of $M$ are $M^0,M^1$ and $M^2$.
Without loss of generality, suppose $M^0=xM^1+(1-x)M^2$, where
$0\leq x\leq1$. Then the quantum advantage
\begin{equation}
QA\leq\frac{1}{3}+\frac{1}{3x_b},
\end{equation}
where $x_b=\max\{x,1-x\}$.
\end{Thm}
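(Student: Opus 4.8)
The plan is to run the protocol from Lemma~\ref{lem:advan} and bound the payoff contribution from each of the three measurement outcomes by exploiting the linear dependence $M^0 = xM^1 + (1-x)M^2$. First I would set up notation: after Player~1 measures in the basis $\{|\psi_i\rangle\}$ and observes outcome $j$ (with probability $p_1(j)$), Player~2 holds $\sigma_j$, and Player~1's best response is to set her register to $|i\rangle$ where $i$ maximizes (resp. minimizes, depending on the payoff matrix) $\langle i|\sigma_j|i\rangle$. Writing $c_j(i) = \langle i|\sigma_j|i\rangle = \sum_k p(k|j)\, M(i,k)$, the per-outcome payoff under $U_1$ is $n\cdot\max_i c_j(i) - 1$, and since $\sum_i c_j(i) = 1$ we have $\max_i c_j(i) \le 1$, giving a crude ceiling of $n-1 = 2$ on each term. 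The point is to do better using the rank deficiency. Note that $c_j(i)$ is the $i$-th coordinate of the vector $\sum_k p(k|j) M^k$, a convex combination of the columns of $M$; because $M^0$ lies on the segment between $M^1$ and $M^2$, every such convex combination lies in the convex hull of just $M^1$ and $M^2$, i.e. $c_j = \lambda_j M^1 + (1-\lambda_j) M^2$ for some $\lambda_j \in [0,1]$ determined by $p(\cdot|j)$.

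Next I would use the correlated-equilibrium constraint. By Lemma~\ref{lem:ce} the measurement statistics are uniform, which forces $\sum_j p_1(j)\, c_j(i) = 1/n$ for every $i$, i.e. the $p_1$-average of the vectors $c_j$ is the uniform vector. In terms of the $\lambda_j$'s and the two extreme columns $M^1, M^2$, this says $\big(\sum_j p_1(j)\lambda_j\big) M^1 + \big(1 - \sum_j p_1(j)\lambda_j\big) M^2 = \frac{1}{n}J\text{'s column}$; together with $M^0 = xM^1 + (1-x)M^2$ also averaging correctly this pins down $\bar\lambda := \sum_j p_1(j)\lambda_j$ in terms of $x$. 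The quantum advantage is then $QA = \sum_j p_1(j)\big(n\max_i c_j(i) - 1\big) = n\sum_j p_1(j)\max_i c_j(i) - 1$ (taking the $U_1$ case; the $U_2$ case is symmetric), so everything reduces to bounding $\sum_j p_1(j)\max_i c_j(i)$ subject to the averaging constraint on the $\lambda_j$'s. Since $\max_i c_j(i)$ is a convex function of $\lambda_j$ (it is a max of affine functions of $\lambda_j$), and we only constrain the mean $\bar\lambda$, the worst case pushes mass to the endpoints $\lambda_j \in \{0,1\}$; I would compute $\max_i M^1(i)$ and $\max_i M^2(i)$ — these are the largest entries of the two extreme columns — and bound $QA \le n\big(\bar\lambda \max_i M^1(i) + (1-\bar\lambda)\max_i M^2(i)\big) - 1$, then feed in the value of $\bar\lambda$ forced by $x$ and simplify to $\tfrac13 + \tfrac1{3x_b}$ for $n = 3$.

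The main obstacle I anticipate is controlling $\max_i M^1(i)$ and $\max_i M^2(i)$ — i.e. showing these extreme-column maxima are not too large — purely from the structural constraints available ($M$ is doubly stochastic-like: each $M(i,j) = |\langle i|\psi_j\rangle|^2 \ge 0$, rows and columns of $M$ each sum to $1$ since both $\{|i\rangle\}$ and $\{|\psi_j\rangle\}$ are orthonormal bases), plus the rank-2 relation. A column can have an entry as large as $1$ in general, which would make the bound vacuous, so the argument must use that if $M^1$ had a huge entry then the dependency $M^0 = xM^1 + (1-x)M^2$ together with the column-sum-$1$ condition on all three columns constrains $x$ (forcing $x_b$ close to $1$), and it is exactly this trade-off that produces the $1/x_b$ factor. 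Making this trade-off quantitative — carefully tracking how a large entry in an extreme column forces $x$ near $0$ or $1$ — is where the real work lies; the rest is convexity and bookkeeping. I would also need to handle $j \notin \supp(p_1)$ (those outcomes contribute nothing, consistent with Lemma~\ref{lem:advan}) and verify the symmetric $U_2$ branch gives the same bound by replacing $\max$ with $\min$ and using $\sum_i c_j(i) = 1$.
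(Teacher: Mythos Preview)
Your convexity reduction is clean, but it only uses a \emph{marginal} consequence of Lemma~\ref{lem:ce}: you average the vectors $c_j$ to get $\sum_j p_1(j)c_j = \tfrac{1}{3}\mathbf{1}$, which fixes $\bar\lambda=\tfrac{x+1}{3}$, and then apply the chord inequality for the convex function $f(\lambda)=\max_k\big(\lambda M^1(k)+(1-\lambda)M^2(k)\big)$. That yields the valid inequality
\[
QA \ \le\ (x+1)\max_k M^1(k) + (2-x)\max_k M^2(k) - 1,
\]
but this is strictly weaker than the claimed bound. Already for the paper's own example, where $M^0=M^1=(\tfrac12,\tfrac12,0)^T$, $M^2=(0,0,1)^T$, one has $x=1$, $\max_k M^1(k)=\tfrac12$, $\max_k M^2(k)=1$, and your inequality gives $QA\le 1$, whereas $\tfrac13+\tfrac{1}{3x_b}=\tfrac23$. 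The trade-off you anticipate (``a large entry in a column forces $x_b$ near $1$'') is correct in direction but cannot close this gap: in the example $x_b$ is already at its maximum value $1$, yet your bound is still $1$.

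The missing ingredient is the \emph{full} correlated-equilibrium constraint $M P M^T = J/9$, not just the marginal $\sum_j p_1(j)c_j=\tfrac13\mathbf{1}$. The paper writes $P=J/9+\bar P$ and uses the full constraint to obtain $M\bar P=0$, so every column of $\bar P$ lies in the one-dimensional kernel of $M$, i.e.\ is a scalar multiple $k_j$ of $(1,-x,-(1-x))^T$. It is then the \emph{nonnegativity of $P$} (entries of $J/9+\bar P$ are $\ge 0$) that forces $k_j\le \tfrac{1}{9x_b}$; this is where the $1/x_b$ factor genuinely comes from, not from bounding $\max_k M^1(k)$ or $\max_k M^2(k)$. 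So to make your outline work you would need to bring in the full equality $MPM^T=J/9$ and exploit the kernel structure of $\bar P$; the purely column-geometric route you sketch does not reach the target.
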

\vspace{0.1in}
\begin{proof}
By Lemma \ref{lem:ce}, for any $0\leq k,l\leq 2$ we have
\begin{equation}\label{eq:forOpt_1}
\sum_{i,j=0}^{2}p(i,j)|\langle k|\psi_i\rangle|^2\cdot|\langle
l|\psi_j\rangle|^2=\frac{1}{9}.
\end{equation}
This turns out to be equivalent
to
\begin{equation}
M\cdot P\cdot M^T=\frac{J}{9}.
\end{equation}
Noting that $M\cdot (J/9)\cdot M^T=J/9$, we know that $P$ can be
expressed as
\begin{equation}\label{eq:forOpt_2}
P = \frac{J}{9}+\bar{P},
\end{equation}
where $M\cdot\bar{P}\cdot M^T=0$. By straightforward calculation,
one can show that Eq.\eqref{eq:forOpt_2} indicates
\begin{equation}
M\cdot\bar{P}=0.
\end{equation}
Considering the form of $M$, $\bar{P}$ can now be expressed as
\begin{equation}
\bar{P}=
\begin{pmatrix}
k_0 & k_1 & k_2\\
-k_0x & -k_1x & -k_2x\\
-k_0(1-x) & -k_1(1-x) & -k_2(1-x)
\end{pmatrix},
\end{equation}
where $k_0,k_1$ and $k_2$ are real numbers.

According to the discussion above, we know that the maximal quantum
advantage is
\begin{equation}
QA = \sum_{i=0}^{2}p_1(i)[2\cdot\langle
l_i|\sigma_i|l_i\rangle-1\cdot(1-\langle l_i|\sigma_i|l_i\rangle)],
\end{equation}
where $l_i=\max_{l}\langle l|\sigma_i|l\rangle$. Then it holds that
\begin{align*}
QA & =  3\sum_{i=0}^{2}p_1(i)\cdot\langle l_i|\sigma_i|l_i\rangle -1 \\
& = 3\sum_{i,j=0}^{2}p(i,j)|\langle l_i|\psi_j\rangle|^2 -1 \\
& = 3\sum_{i,j=0}^{2}\left(\frac{1}{9}+\bar{p}(i,j)\right)|\langle l_i|\psi_j\rangle|^2 -1 \\
& = 3\sum_{i=0}^{2}\left(\sum_{j=0}^{2}\bar{p}(i,j)|\langle
l_i|\psi_j\rangle|^2\right),
\end{align*}
where $\bar{p}(i,j)$ is the element of $\bar{P}$. At the same time,
it can be obtained that $l_i=\max_{l}\sum_j \bar{p}(i,j)|\langle
l|\psi_j\rangle|^2$. Besides, recall that the rank of $M$ is 2, then
there must be one row of $M$, say $M_2$, has the form of
$aM_0+(1-a)M_1$, where $M_0$ and $M_1$ are the other two rows of
$M$, and $0\leq a\leq1$. Then it can be known that every $l_i$ must
be $0$ or $1$. Based on the form of $\bar{P}$, we have that $l_0\neq
l_1=l_2$. Without loss of generality, we suppose $l_0=0$, and
$l_1=l_2=1$. Then
\begin{align*}
QA & =  3\sum_{j=0}^{2}\bar{p}(0,j)|\langle
0|\psi_j\rangle|^2+3\sum_{j=0}^{2}(\bar{p}(1,j)+\bar{p}(2,j))|\langle
1|\psi_j\rangle|^2 \\
& = 3\sum_{j=0}^{2}\bar{p}(0,j)|\langle
0|\psi_j\rangle|^2-3\sum_{j=0}^{2}\bar{p}(0,j)|\langle
1|\psi_j\rangle|^2.
\end{align*}
Note that $\bar{P}+J/9$ is a matrix with nonnegative elements.
Thus, for any $0\leq i\leq 2$, if $k_i\geq0$ we have
\begin{equation}\label{eq:forOpt_3}
-k_ix\geq-\frac{1}{9} \quad \text{and} \quad
-k_i(1-x)\geq-\frac{1}{9},
\end{equation}
and if $k_i<0$, we have $-k_i\leq\frac{1}{9}$. And
Eq.\eqref{eq:forOpt_3} indicates that if $0<x<1$,
\begin{equation}\label{eq:forOpt_4}
k_i\leq\frac{1}{9x} \quad \text{and} \quad k_i\leq\frac{1}{9(1-x)},
\end{equation}
which is equivalent to
\begin{equation}\label{eq:forOpt_5}
k_i\leq\frac{1}{9x_b}.
\end{equation}
Actually, Eq.\eqref{eq:forOpt_5} also holds when $x=0$ or $x=1$.
Therefore, we obtain that
\begin{align*}
QA & = 3\sum_{j=0}^{2}\bar{p}(0,j)|\langle
0|\psi_j\rangle|^2-3\sum_{j=0}^{2}\bar{p}(0,j)|\langle
1|\psi_j\rangle|^2 \\
& = 3\sum_{j=0}^{2}k_j|\langle
0|\psi_j\rangle|^2-3\sum_{j=0}^{2}k_j|\langle
1|\psi_j\rangle|^2 \\
& \leq 3\cdot\frac{1}{9x_b}+3\cdot\frac{1}{9} \\
& = \frac{1}{3}+\frac{1}{3x_b},
\end{align*}
where the relationship
$\sum_j|\langle0|\psi_j\rangle|^2=\sum_j|\langle1|\psi_j\rangle|^2=1$
is utilized.
\end{proof}

Go back to the example in the above subsection. Note that for $M$ in
Eq.\eqref{eq:forExa_1} we have $M^0=1\cdot M^1+0\cdot M^2$(thus in
order to utilize Theorem.\ref{thm:Opt}, we need to adjust the order
of the columns). Thus we can choose $x=0$, and then $x_b=1$. As a
result, the discussion above shows that $QA\leq2/3$, which means the
choice of $P$ in Eq.\eqref{eq:forExa_2} is optimal for $M$ in
Eq.\eqref{eq:forExa_1}.

\section{Conclusion}

In this paper, we have shown that in certain fair equilibria of some
natural games, the player with a quantum computer can be more
powerful than the classical opponent, even if the equilibria have
zero entanglement or discord. This indicates that at least in games,
the standard understanding that nonzero discord is necessary to
produce non-classical correlations needs to be adjusted. Our work
provides new visions for further studies on quantum information,
especially when at least two parties are involved with different
objectives.

From the mathematical perspective, some questions remain open. Two
of them are listed as below: (1) What is the maximum gain in a
zero-sum $[-1,1]$-normalized game\footnote{A game is
$[-1,1]$-normalized if all utility functions have ranges within
$[-1,1]$.} on a state in symmetric subspace without entanglement?
(2) What is the maximum gain in a zero-sum $[-1,1]$-normalized game
on a state in symmetric subspace without discord?

\section*{Acknowledgments} Z.W. thanks Leong Chuan
Kwek and Luming Duan for helpful comments. Z.W. was supported by the Singapore National Research Foundation under NRF RF Award
No.~NRF-NRFF2013-13 and the WBS grant under contract no.
R-710-000-007-271. S.Z. was supported by Research Grants Council of the Hong Kong S.A.R. (Project no. CUHK419011, CUHK419413), and this research benefited from a visit to Tsinghua University partially supported by China Basic
Research Grant 2011CBA00300 (sub-project 2011CBA00301).

\bibliographystyle{alpha}
\bibliography{QGame}
\end{document}